\DeclareMathOperator*{\argmin}{argmin}
\newcommand{\R}{\mathbb{R}}
\newcommand{\abs}[1]{|#1|}
\newcommand{\norm}[1]{\|#1\|}
\newcommand{\dgr}{\mathrm{d}}   
\newcommand{\fs}{(\R^L)^V}      
\newtheorem{theorem}{Theorem}[section]
\newtheorem{lemma}[theorem]{Lemma}
\newtheorem{corollary}[theorem]{Corollary}
\newtheorem{proposition}[theorem]{Proposition}
\newtheorem{definition}[theorem]{Definition}
\begin{document}

\title{Diffusion Methods for Classification with Pairwise
  Relationships}

\author{Pedro F. Felzenszwalb\thanks{Partially supported by NSF under
    grant 1161282 and the Brown-IMPA collaboration program.} \\
  Brown University \\
  Providence, RI, USA \\
  {\tt pff@brown.edu}
  \and Benar F. Svaiter\thanks{Partially supported by CNPq grants
    474996/2013-1, 302962/2011-5 and FAPERJ grant E-26/201.584/2014.} \\
  IMPA \\
  Rio de Janeiro, RJ, Brazil \\
  {\tt benar@impa.br}} \maketitle

\begin{abstract}
  We define two algorithms for propagating information in
  classification problems with pairwise relationships.  The algorithms
  are based on contraction maps and are related to non-linear
  diffusion and random walks on graphs.  The approach is also related
  to message passing algorithms, including belief propagation and mean
  field methods.  The algorithms we describe are guaranteed to
  converge on graphs with arbitrary topology.  Moreover they always
  converge to a unique fixed point, independent of initialization.  We
  prove that the fixed points of the algorithms under consideration
  define lower-bounds on the energy function and the max-marginals of
  a Markov random field.  The theoretical results also illustrate a
  relationship between message passing algorithms and value iteration
  for an infinite horizon Markov decision process.  We illustrate the
  practical application of the algorithms under study with numerical
  experiments in image restoration and stereo depth estimation.
\end{abstract}

\section{Introduction}

In many classification problems there are relationships among a set of
items to be classified.  For example, in image reconstruction problems
adjacent pixels are likely to belong to the same object or image
segment.  This leads to relationships between the labels of different
pixels in an image.  Energy minimization methods based on Markov
random fields (MRF) address these problems in a common framework
\cite{Besag74,WJ08,KF09}.  Within this framework we introduce two new
algorithms for classification with pairwise information.  These
algorithms are based on contraction maps and are related to
non-linear diffusion and random walks on graphs.

The setting under consideration is as follows.  Let $G=(V,E)$ be an
undirected simple graph and $L$ be a set of labels. A labeling of $V$
is a function $x : V \to L$ assigning a label from $L$ to each vertex
in $V$.
Local information is modeled by a cost
$g_i(a)$ for
assigning label $a$ to vertex $i$.  Information on label compatibility  
for neighboring vertices
is modeled by a cost
$h_{ij}(a,b)$ for assigning label $a$ to vertex $i$ and
label $b$ to vertex $j$. 
The cost for a labeling $x$ is defined by an
energy function,
\begin{equation}
  F(x) = \sum_{i \in V} g_i(x_i) + \sum_{\{i,j\} \in E} h_{ij}(x_i,x_j).
\end{equation}
In the context of MRFs the energy function defines a Gibbs
distribution on random variables $X$ associated with the
vertices $V$,
\begin{align}
  p(X=x) & = \frac{1}{Z} \exp(-F(x)).
\end{align}

Minimizing the energy $F(x)$ corresponds to maximizing $p(X=x)$.  This
approach has been applied to a variety of problems in image processing
and computer vision \cite{FZ11}.  A classical example involves
restoring corrupted images \cite{GG84,Besag86}.  In image restoration there is
a grid of pixels and the problem is to estimate an intensity value
for each pixel.  To restore an image $I$ one looks for an image $J$ that
is similar to $I$ and is smooth almost everywhere.  Similarity between
$I$ and $J$ is defined by local costs at each pixel.  The smoothness
constraint is defined by pairwise costs between neighboring pixels in
$J$.

\subsection{Basic Definitions and Overview of Results}

Let $G=(V,E)$ be an undirected, simple, connected graph, with more
than one vertex.  For simplicity let $V=\{1,\dots,n\}$.
Let $N(i)$ and $\dgr(i)$ denote respectively the set of
neighbors and the degree of vertex $i$,
\begin{align*}
  N(i) = \{j \in V \;|\; \{i,j\} \in E\}, \quad \dgr(i)=|N(i)|.
\end{align*}

Let $L$ be a set of labels.  For each vertex $i \in V$ we have a
non-negative cost for assigning label $a$ to vertex $i$, denoted by
$g_i(a)$.  These costs capture local information about the label of
each vertex.  For each edge $\{i,j\} \in E$ we have a non-negative
cost for assigning label $a$ to vertex $i$ and label $b$ to vertex
$j$, denoted equally by $h_{ij}(a,b)$ or $h_{ji}(b,a)$.  These costs capture
relationships between labels of neighboring vertices.
\begin{itemize}
  \item[] $g_i:L\to [0,\infty)$ for $i \in V$;
  \item[] $h_{ij},h_{ji}:L^2 \to [0,\infty)$ for $\{i,j\} \in E$ with
$h_{ij}(a,b) = h_{ji}(b,a)$
\end{itemize}

Let $x \in L^V$ denote a labeling of $V$ with labels from $L$.  A
cost for $x$ that takes into account both local information
at each vertex and the pairwise relationships can be defined by an
energy function $F:L^V\to\R$,
\begin{align}
  \label{eq:F}
  F(x)=\sum_{i\in V} g_i(x_i) +\sum_{\{i,j\}\in E} h_{ij}(x_i,x_j).
\end{align}
This leads to a natural optimization problem where we look for a
labeling $x$ with minimum energy.

Throughout the paper we assume $L$ is finite.  The optimization
problem defined by $F$ is NP-hard even when $|L|=2$ as it can be used
to solve the independent set problem on $G$.  It can also be used to
solve coloring with $k$ colors when $|L|=k$.  The optimization problem
can be solved in polynomial time using dynamic programming when $G$ is
a tree \cite{BB72}.  More generally dynamic programming leads to
polynomial optimization algorithms when the graph $G$ is chordal (triangulated) and has bounded tree-width.  

Min-sum (max-product) belief propagation
\cite{WJ08,KF09} is a local message passing algorithm that is
equivalent to dynamic programming when $G$ is a tree.  Both dynamic
programming and belief propagation aggregate local costs by sequential
propagation of information along the edges in $E$.

For $i \in V$ we define the value function $f_i:L\to\R$,
\begin{align}
  \label{eq:f} 
  f_i(\tau) = \min_{\substack{x \in L^V\\ x_i=\tau}} F(x).
\end{align}
In the context of MRFs the value functions are also known as
\emph{max-marginals}.
The value functions are also what is computed by the dynamic programming and belief propagation algorithms for minimizing $F$ when $G$ is a tree.
Each value function defines a cost for assigning a
label to a vertex that takes into account the whole graph. If $x^*$
minimizes $F(x)$ then $x_i^*$ minimizes $f_i(\tau)$, and when
$f_i(\tau)$ has a unique minimum we can minimize $F(x)$ by selecting
\begin{equation}
x^*_i = \argmin_{\tau} f_i(\tau).
\end{equation}

A local belief is a function $\gamma:L\to\R$.  A
field of beliefs specifies a local belief 
for each vertex in $V$, and is an element of
\begin{align}
  \fs = \{\varphi=(\varphi_1,\ldots,\varphi_N)\;|\; \varphi_i:L\to\R\}.
\end{align}
We define two algorithms in terms of maps,
\begin{align*}
  T:\fs\to\fs,\\
  S:\fs\to\fs.
\end{align*}

The maps $T$ and $S$ are closely related.  Both maps are contractions,
but each of them has its own unique fixed point.  Each of these maps
can be used to define an algorithm to optimize $F(x)$ based on
fixed point iterations and local decisions.

For $z\in\{T,S\}$ we start from an initial field of beliefs
$\varphi^0$ and sequentially compute $$\varphi^{k+1}=z(\varphi^k).$$

Both $S^k(\varphi^0)$ and $T^k(\varphi^0)$ converge to the unique
fixed points of $S$ and $T$ respectively.  After convergence to a
fixed point $\varphi$ (or a bounded number of iterations in practice)
we select a labeling $x$ by selecting the label minimizing the belief
at each vertex (breaking ties arbitrarily),
\begin{align}
  x_i = \argmin_\tau \varphi_i(\tau).
\end{align}

The algorithms we consider depend on parameters $p \in (0,1)$, $q=1-p$
and weights $w_{ij} \in [0,1]$ for each $i \in V$ and $j \in N(i)$.
The weights from each vertex are constrained to sum to one,
\begin{equation}
  \label{eq:wsum}
  \sum_{j \in N(i)} w_{ij} = 1, \qquad \forall i\in V.
\end{equation}
These weights can be interpreted in terms of transition probabilities
for a random walk on $G$.  In a uniform random walk we have $w_{ij} =
1/\dgr(i)$.  Non-uniform weights can be used to capture additional
information about an underlying application.  For example, in the case
of stereo depth estimation (Section~\ref{sec:stereo}) we have used
non-uniform weights that reflect color similarity between neighboring
pixels.  We note, however, that while we may interpret the results of
the fixed point algorithms in terms of transition probabilities in a
random walk, the algorithms we study are deterministic.

The maps $S$ and $T$ we consider are defined as follows,
\begin{definition}
  \label{df:maps}
  \begin{align}
    \label{eq:T}
    &(T \varphi)_i(\tau)
    = p g_i(\tau) + \sum_{j\in N(i)} \min_{u_j\in L}\;
    \dfrac{p}{2}h_{ij}(\tau,u_j)
    +q w_{ji} \varphi_j(u_j) \\
    \label{eq:S}
    &(S \varphi)_i(\tau)
    = p g_i(\tau) + \sum_{j\in N(i)} w_{ij} \min_{u_j\in L}\;
    p h_{ij}(\tau,u_j)
    +q\varphi_j(u_j) 
  \end{align}
\end{definition}

The map defined by $T$ corresponds to a form of non-linear diffusion
of beliefs along the edges of $G$.  The map defined by $S$ corresponds
to value iteration for a Markov decision process (MDP)
\cite{Bertsekas05} defined by random walks on $G$.
We show that both $S$ and $T$ are contractions.  Let $\bar{\varphi}$ be the
fixed point of $T$ and $\hat{\varphi}$ be the fixed point of $S$.
We show $\bar{\varphi}$ defines a lower bound on
the energy function $F$, and that $\hat{\varphi}$ defines lower bounds
on the value functions $f_i$,
\begin{align}
  \sum_{i \in V} \bar{\varphi}_i(x_i) & \leq F(x),\qquad \forall x\in L^V, \\
  \hat{\varphi}_i(\tau) & \le f_i(\tau),\qquad \forall i\in V,\; \tau\in L.
\end{align}

In Section~\ref{sec:T} we study the fixed point iteration algorithm
defined by $T$ and the relationship between $\bar{\varphi}$ and $F$.
To the extent that $\sum_{i \in V} \bar{\varphi}_i(x_i)$
approximates $F(x)$ this justifies  selecting a labeling $x$ by
minimizing $\bar{\varphi}_i$ at each vertex.  This approach is
related to mean field methods and variational inference with
the Gibbs distribution $p(X=x)$ \cite{WJ08,KF09}.

In Section~\ref{sec:S} we study the algorithm defined by $S$ and the
relationship between $\hat{\varphi}_i$ and $f_i$.  To the extent that
$\hat{\varphi}_i(\tau)$ approximates $f_i(\tau)$ this justifies
selecting a labeling $x$ by minimizing $\hat{\varphi}_i$ at each
vertex.  We also show a connection between the fixed point
$\hat{\varphi}$ and optimal policies of a Markov decision process.
The process is defined in terms of random walks on
$G$, with transition probabilities given by the weights $w_{ij}$.

\subsection{Examples}

\begin{figure}[t]
\begin{centering}  
  \includegraphics[height=1in]{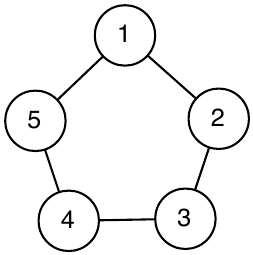}
  \vspace{1cm}
  
  \begin{tabular}{cc}
  \includegraphics[width=3in]{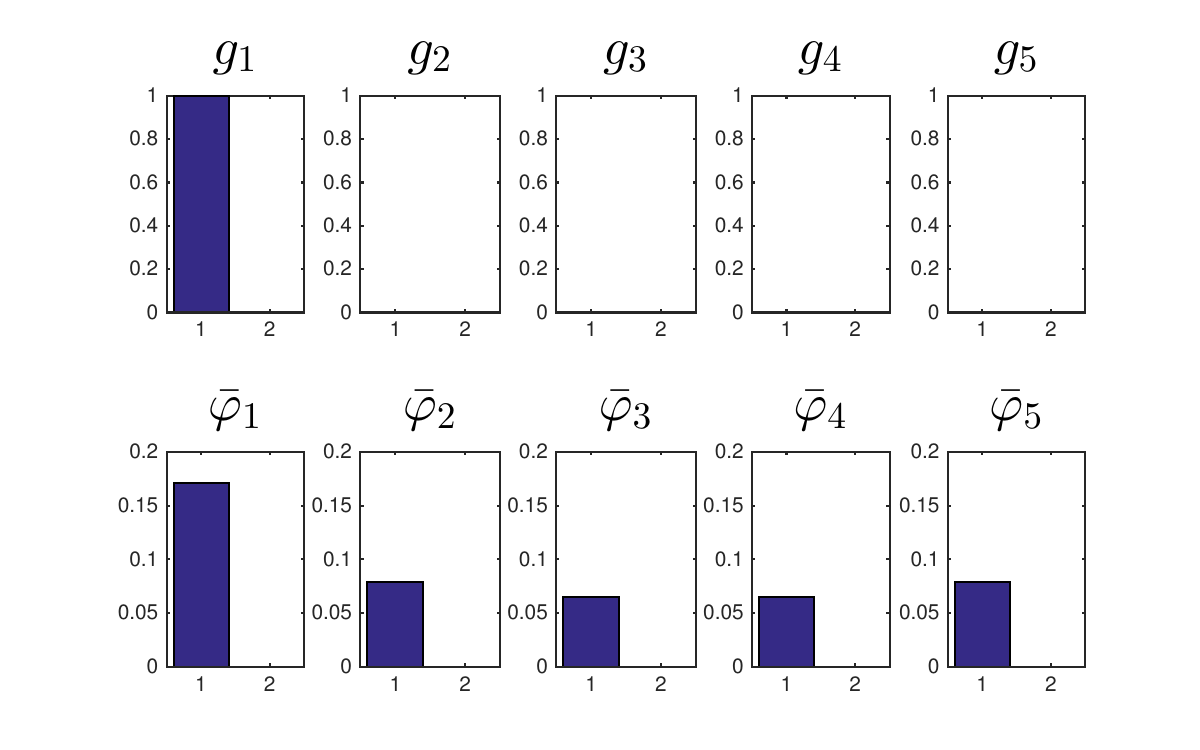} &
  \includegraphics[width=3in]{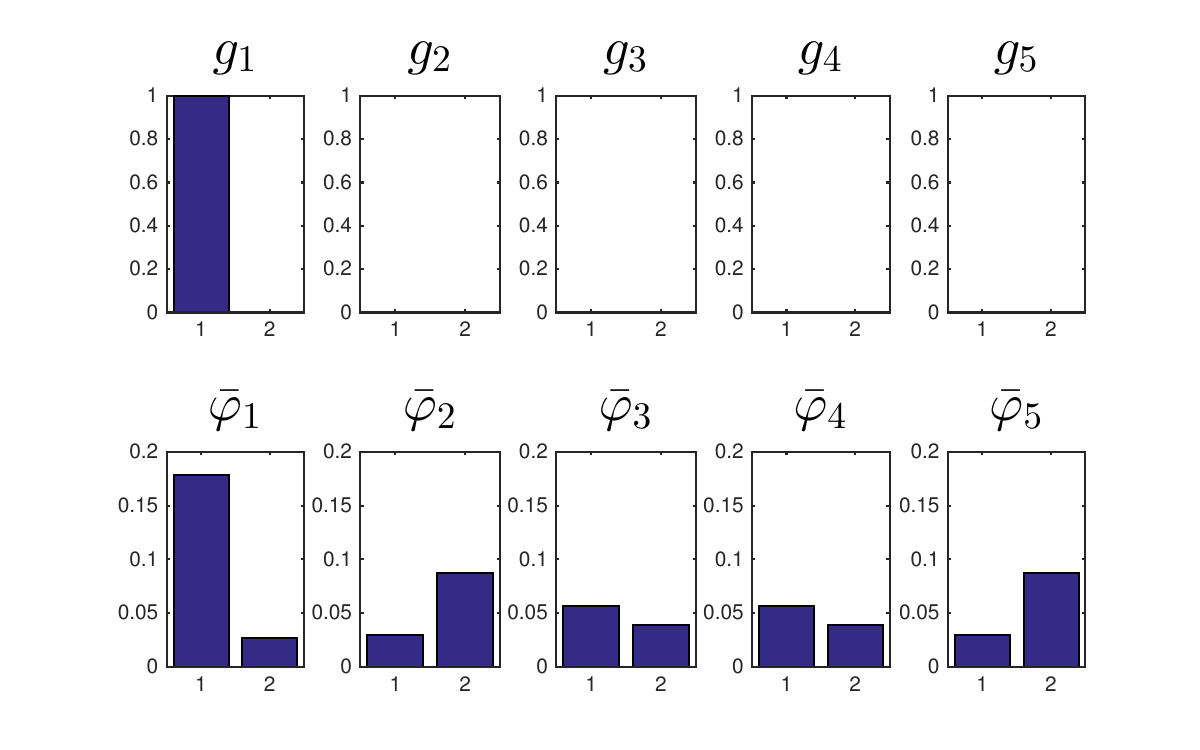} \\
  {\bf (a)} Attractive relationships & {\bf (b)} Repulsive relationships
\end{tabular}
\end{centering}
\caption{The fixed points of $T$ on two problems defined on the graph
  above.  In this case $L=\{1,2\}$.  In both cases the local costs
  $g_i$ are all zero except for vertex 1 who has a preference towards
  label 2.  In {\bf (a)} the pairwise costs encourage neighboring
  vertices to take the same label.  In {\bf (b)} the pairwise costs
  encourage neighboring vertices to take different labels.}
\label{fig:example}
\end{figure}

Figure~\ref{fig:example} shows two examples of fixed points of $T$
when the graph $G=(V,E)$ is a cycle with 5 vertices.  In this case we
have a binary labeling problem $L=\{1,2\}$.  The local costs are
all zero except that vertex 1 has a preference for label 2.  This is
encoded by a cost for label 1,
\begin{align}
  g_1(1) & = 1, \\
  g_1(2) & = 0, \\
  g_i(a) & = 0, \qquad \forall i \neq 1, \; a\in L.
\end{align}
In example (a) we have pairwise costs
that encourage equal labels for neighboring vertices,
\begin{align}
  h_{ij}(a,b) =
  \begin{cases}
    0 \qquad a = b \\
    1 \qquad a \neq b,
  \end{cases}
\end{align}
In example (b) we have pairwise costs that encourage different labels
for neighboring vertices,
\begin{align}
  h_{ij}(a,b) =
  \begin{cases}
    1 \qquad a = b \\
    0 \qquad a \neq b,
  \end{cases}
\end{align}

Figure~\ref{fig:example} shows a graphical representation of the local
costs for each vertex, and the value of $\bar{\varphi}$, the fixed
point of $T$, on each example.  In (a) local selection of $x_i$
minimizing $\bar{\varphi}_i$ leads to $x=(2,2,2,2,2)$.  In (b) local
selection of $x_i$ minimizing $\bar{\varphi}_i$ leads to
$x=(2,1,2,2,1)$.  In both examples the resulting labeling $x$ is the
global minimum of $F(x)$.  For these examples we used $p=0.1$ and
$w_{ij} = 1/\dgr(i)$.

Of course in general local minimization of
$\bar{\varphi}$ does not lead to a labeling minimizing $F(x)$
and it would be interesting to characterize when this happens.

\subsection{Related Work}

For general graphs $G$, when the pairwise costs $h_{ij}(a,b)$ define a
metric over $L$ there are polynomial time approximation algorithms for
the optimization problem defined by $F$ \cite{KT02}.  In some
important cases the optimization problem can be solved using graph
cuts and maximum flow algorithms \cite{GPS89,BVZ01,Boros02,KZ04}.
This includes in particular the case of MAP estimation for an Ising
model with an external field \cite{GPS89}.

The algorithms we study are closely related to message passing
methods, in particular to min-sum (or equivalently max-product) belief
propagation (BP) \cite{WJ08,KF09}. When the graph $G$ is a tree, BP
converges and solves the optimization problem defined by $F$.
Unfortunately BP is not guaranteed to converge and it can
have multiple fixed points for general graphs. Some form of dampening
can help BP converge in practice.
The algorithms we study provide a simple alternative to min-sum belief
propagation that is guaranteed to converge to a unique fixed point,
regardless of initialization. The algorithms are also guaranteed to
converge ``quickly''.

One approach for solving the optimization problem defined by $F$
involves using a linear program (LP) relaxation.  The optimization problem
can be posed using a LP with a large number of constraints and relaxed
to obtain a tractable LP over the \emph{local polytope} \cite{WJW05}.
Several message passing methods have been motivated in terms of this
LP \cite{MGW09}.  There are also recent methods which use message
passing in the inner loop of an algorithm that converges to the
optimal solution of the local polytope LP relaxation 
\cite{RAW10,SSKS12}.
In Section~\ref{sec:lp} we characterize the fixed point of $S$ using
a different LP.

The mean-field algorithm \cite{WJ08,KF09} is an iterative method for
approximating the Gibbs distribution $p(x)$ by a factored
distribution $q(x)$,
\begin{equation}
  q(x) = \prod_{i \in V} q_i(x_i).
\end{equation}
The mean-field approach involves minimization of the KL divergence
between $p$ and $q$ using fixed point iterations that repeatedly
update the factors $q_i$ defining $q$.  A drawback of the approach is
that the fixed point is not unique and the method is sensitive to
initialization.

The algorithm defined by $T$ is related to the mean-field method in
the sense that the fixed points of $T$ appear to approximate $F(x)$ by
a function $H(x)$ that is a sum of local terms,
\begin{equation}
  H(x) = \sum_{i \in V} \bar{\varphi}_i(x_i).
\end{equation}
We do not know, however, if there is a measure under which the
resulting $H(x)$ is an optimal approximation to $F(x)$ within the
class of functions defined by a sum of local terms.

\section{Preliminaries}

The algorithms we study are efficient in the following sense.  Let
$m=|E|$ and $k=|L|$.  Each iteration in the fixed point algorithm
involves evaluating $T$ or $S$.  This can be done in $O(mk^2)$ by
``brute-force'' evaluation of the expressions in
Definition~\ref{df:maps}.  In many applications, including in image
restoration and stereo matching, the pairwise cost $h_{ij}$ has
special structure that allows for faster computation using the
techniques described in \cite{FH12}.  This leads to an $O(mk)$
algorithm for each iteration of the fixed point methods.  Additionally,
the algorithms are easily parallelizable.

The fixed point 
algorithms defined by $T$ and $S$ converge quickly because the
maps are contractions in $\fs$.

Let $z:\R^K \to \R^K$ and $\norm{x}$ be a norm in $\R^K$.  For $\gamma
\in (0,1)$,   $z$ is a $\gamma$-contraction if
\begin{equation}
  \norm{z(x)-z(y)} \le \gamma\norm{x-y}.
\end{equation}
When $z$ is a contraction it has a unique fixed point $\bar{x}$.  It
also follows directly from the contraction property that 
fixed point iteration $x_k = z(x_{k-1})$ converges to $\bar{x}$ quickly,
\begin{equation}
  \norm{x_k-\bar{x}} \le \gamma^k \norm{x_0-\bar{x}}.
\end{equation}

The weights $w_{ij}$ in the definition of $T$ and $S$ define a random
process that generates random walks on $G$.  We have a Markov chain
with state space $V$.  Starting from a vertex $Q_0$ we generate an
infinite sequence of random vertices $(Q_0,Q_1,\ldots)$ with
transition probabilities
\begin{equation}
p(Q_{t+1}=j|Q_t=i) = w_{ij}.
\end{equation}
A natural choice for the weights is $w_{ij} = 1/\dgr(i)$, corresponding
to moving from $i$ to $j$ with uniform probability over $N(i)$.
This choice leads to uniform random
walks on $G$ \cite{Lovasz93}.  

We consider in $\fs$ the partial order
\begin{align}
  \varphi \leq \psi \iff \varphi_i(\tau) \leq \psi_i(\tau) \;\;
  \forall i\in V,\; \forall \tau\in L.
\end{align}
It follows trivially from the definitions of $T$ and $S$ that both
maps preserve order in $\fs$, 
\begin{equation}
  \varphi \leq \psi \Rightarrow T\varphi \leq T\psi,\;
  S\varphi \leq S\psi.
\label{eq:order}
\end{equation}

We claim that for any $\alpha\in\R^V$,
\begin{align}
  \label{eq:fsum}
  \sum_{i\in V}\sum_{j\in N(i)} w_{ji} \alpha_j = \sum_{j\in V} \alpha_j.
\end{align}
This follows from re-ordering the double summation and the constraints that the weights out of each vertex
sum to one,
\begin{align*}
  \sum_{i\in V}\sum_{j\in N(i)} w_{ji} \alpha_j = \sum_{j\in V} \sum_{i \in N(j)} w_{ji} \alpha_j = \sum_{j \in V} \alpha_j
\end{align*}

We note that the algorithms defined by $T$ and $S$ are related in the
following sense.  For a regular graph with degree $d$, if we let
$w_{ij} = 1/d$ the maps $T$ and $S$ are equivalent up to rescaling if
the costs in $T$ and $S$ are rescaled appropriately.

\section{Algorithm defined by $T$ (Diffusion)}
\label{sec:T}

In this section we study the fixed point algorithm defined by $T$.  We
show that $T$ is a contraction in $\fs$ and that the fixed point
of $T$ defines a ``factored'' lower bound on $F$.  

We start by showing that $T$ is a contraction with respect
to the norm on $\fs$ defined by
\begin{align}
  \norm{\varphi}_{\infty,1}=\sum_{i\in V}\norm{\varphi_i}_\infty.
\end{align}

\begin{lemma}
  \label{lm:ctT}
  (Contraction) For any $\varphi,\psi\in \fs$
  \begin{align}
    \label{eq:ct1}
    \norm{(T\varphi)_i - (T\psi)_i}_\infty & \le q \sum_{j \in N(i)}
    w_{ji} \norm{\varphi_j - \psi_j}_\infty \qquad \forall i\in V,\\
    \label{eq:ct2}
    \norm{(T\varphi)-(T\psi)}_{\infty,1} & \leq q
    \norm{\varphi-\psi}_{\infty,1}.
  \end{align}
\end{lemma}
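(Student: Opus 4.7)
The plan is to establish the pointwise bound \eqref{eq:ct1} first, and then obtain \eqref{eq:ct2} by summing over $i$ and invoking the identity~\eqref{eq:fsum}.

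For \eqref{eq:ct1}, I would fix $i\in V$ and $\tau\in L$ and expand $(T\varphi)_i(\tau)-(T\psi)_i(\tau)$ using the definition \eqref{eq:T}. The leading term $p g_i(\tau)$ cancels, leaving a sum over $j\in N(i)$ of differences of minima,
\begin{equation*}
\min_{u_j}\Bigl[\tfrac{p}{2}h_{ij}(\tau,u_j)+qw_{ji}\varphi_j(u_j)\Bigr]-\min_{u_j}\Bigl[\tfrac{p}{2}h_{ij}(\tau,u_j)+qw_{ji}\psi_j(u_j)\Bigr].
\end{equation*}
The standard ``min-vs-min'' trick applies: pick $u_j^{\ast}$ attaining the second minimum, use it as a feasible point in the first, and bound the resulting gap by $qw_{ji}(\varphi_j(u_j^{\ast})-\psi_j(u_j^{\ast}))\le qw_{ji}\|\varphi_j-\psi_j\|_\infty$. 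Exchanging the roles of $\varphi$ and $\psi$ gives the matching lower bound, so termwise
\begin{equation*}
\Bigl|\min_{u_j}[\cdots\varphi_j]-\min_{u_j}[\cdots\psi_j]\Bigr|\le qw_{ji}\|\varphi_j-\psi_j\|_\infty.
\end{equation*}
Summing over $j\in N(i)$ and then taking the supremum over $\tau$ yields \eqref{eq:ct1}.

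For \eqref{eq:ct2}, I would simply sum \eqref{eq:ct1} over $i\in V$, pull the factor $q$ out, and apply the identity \eqref{eq:fsum} with $\alpha_j=\|\varphi_j-\psi_j\|_\infty$ to collapse the double sum to $\sum_j\|\varphi_j-\psi_j\|_\infty=\|\varphi-\psi\|_{\infty,1}$.

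The only mildly delicate step is the min-vs-min bound, which is routine; the rest is bookkeeping. Since $q=1-p\in(0,1)$, \eqref{eq:ct2} exhibits $T$ as a $q$-contraction in $(\fs,\|\cdot\|_{\infty,1})$, so Banach's fixed-point theorem immediately gives existence and uniqueness of $\bar\varphi$ together with geometric convergence of the iterates, as claimed in the preliminary discussion.
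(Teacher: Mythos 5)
Your proposal is correct and follows essentially the same route as the paper: the termwise min-vs-min bound you describe is exactly what the paper's proof does, just packaged there as a single inequality over a full labeling $x\in L^V$ rather than edge by edge (the two are equivalent since the minimizations over the $u_j$ decouple). The passage to \eqref{eq:ct2} via \eqref{eq:fsum} is also identical.
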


\begin{proof}
  Take $i\in V$ and $\tau \in L$. For any $x\in L^V$ 
  \begin{align*}
    (T\varphi)_i(\tau)
    & = p g_i(\tau)+
    \sum_{j\in N(i)} \min_{u_j \in L} \dfrac{p}{2}h_{ij}(\tau,u_j) +
    q w_{ji} \varphi_j(u_j) \\
    & \leq p g_i(\tau)+
    \sum_{j\in N(i)}\dfrac{p}{2}h_{ij}(\tau,x_j) +
    q w_{ji} \varphi_j(x_j) \\
    & \leq p g_i(\tau) +
    \sum_{j\in N(i)}\dfrac{p}{2}h_{ij}(\tau,x_j) +
    q w_{ji} (\psi_j(x_j) + \abs{\varphi_j(x_j)-\psi_j(x_j)}) \\
    & \leq p g_i(\tau) +
    \sum_{j\in N(i)}\dfrac{p}{2}h_{ij}(\tau,x_j) +
    q w_{ji} (\psi_j(x_j) + \norm{\varphi_j-\psi_j}_\infty) 
  \end{align*}
  Since the inequality defined by the first and last terms above
  holds for any $x$, it holds when $x$
  minimizes the last term.  Therefore
\[
(T\varphi)_i(\tau)\leq (T\psi)_i(\tau) +
q \sum_{j\in N(i)} w_{ji} \norm{\varphi_j-\psi_j}_\infty.
\]
Since this inequality holds interchanging $\varphi$ with $\psi$ we have
\[
\abs{(T\varphi)_i(\tau)-(T\psi)_i(\tau)} \leq
q \sum_{j\in N(i)} w_{ji} \norm{\varphi_j-\psi_j}_\infty.
\]
Taking the $\tau$ maximizing the left hand side proves
(\ref{eq:ct1}).  To prove (\ref{eq:ct2}), we sum the inequalities
(\ref{eq:ct1}) for all $i \in V$ and use (\ref{eq:fsum}).
\end{proof}

The contraction property above implies the fixed point algorithm
defined by $T$ converges to a unique fixed point independent on
initialization.  It also implies the distance to the fixed point
decreases quickly, and we can bound the distance to the fixed point
using either the initial distance to the fixed point or the distance between
consecutive iterates (a readily available measure).

\begin{theorem}
  \label{th:Tconv}
  The map $T$ has a unique fixed point $\bar{\varphi}$ and for
  any $\varphi \in \fs$ and integer $k \ge 0$,
  \begin{align}
    \norm{\bar{\varphi} - T^k \varphi}_{\infty,1}
    & \le q^k \norm{\bar{\varphi}-\varphi}_{\infty,1}, \\
    \norm{\bar{\varphi}-\varphi}_{\infty,1} 
    & \le \dfrac{1}{p} \norm{T\varphi - \varphi}_{\infty,1}.
  \end{align}
\end{theorem}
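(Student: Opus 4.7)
The plan is to treat this theorem as a fairly direct consequence of the contraction property established in Lemma~\ref{lm:ctT}, essentially an instance of the Banach fixed point theorem specialized to our setting, together with one elementary triangle-inequality argument for the a posteriori error bound.

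First I would note that $\fs$ is finite dimensional (since $V$ and $L$ are finite), so $(\fs,\norm{\cdot}_{\infty,1})$ is a Banach space. By (\ref{eq:ct2}) in Lemma~\ref{lm:ctT}, the map $T$ is a $q$-contraction on this space with $q=1-p \in (0,1)$. Banach's fixed point theorem then yields existence and uniqueness of $\bar{\varphi}$ with $T\bar{\varphi}=\bar{\varphi}$, and moreover the orbit $T^k\varphi$ converges to $\bar{\varphi}$ for every starting point $\varphi$.

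For the first inequality I would argue by induction on $k$. The case $k=0$ is trivial. For the inductive step, since $T\bar\varphi=\bar\varphi$,
\[
\norm{\bar\varphi - T^{k+1}\varphi}_{\infty,1} = \norm{T\bar\varphi - T(T^k\varphi)}_{\infty,1}
\le q \norm{\bar\varphi - T^k\varphi}_{\infty,1}
\le q^{k+1}\norm{\bar\varphi-\varphi}_{\infty,1},
\]
using (\ref{eq:ct2}) in the first inequality and the inductive hypothesis in the second.

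For the second inequality (the computable a posteriori bound), I would insert $T\varphi$ via the triangle inequality, and again use that $\bar\varphi$ is the fixed point of $T$ together with (\ref{eq:ct2}):
\[
\norm{\bar\varphi-\varphi}_{\infty,1}
\le \norm{\bar\varphi - T\varphi}_{\infty,1} + \norm{T\varphi-\varphi}_{\infty,1}
= \norm{T\bar\varphi - T\varphi}_{\infty,1} + \norm{T\varphi-\varphi}_{\infty,1}
\le q\norm{\bar\varphi-\varphi}_{\infty,1} + \norm{T\varphi-\varphi}_{\infty,1}.
\]
Rearranging and dividing by $1-q=p$ yields the claimed bound. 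There is no real obstacle here; the content of the theorem is entirely contained in Lemma~\ref{lm:ctT}, and the only thing to be careful about is invoking completeness of $\fs$ (automatic since the space is finite dimensional) before applying Banach.
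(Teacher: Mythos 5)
Your proposal is correct. For existence, uniqueness, and the first inequality you follow the same route as the paper: Lemma~\ref{lm:ctT} gives the $q$-contraction property, and the geometric decay of $\norm{\bar\varphi - T^k\varphi}_{\infty,1}$ follows by iterating it (the paper simply states this as ``trivial from Lemma~\ref{lm:ctT}''; your explicit induction is the same argument spelled out). The only genuine divergence is in the a posteriori bound. The paper telescopes along the whole orbit, writing $\norm{\bar\varphi-\varphi}_{\infty,1} \le \sum_{k=0}^\infty \norm{T^{k+1}\varphi - T^k\varphi}_{\infty,1} \le \sum_{k=0}^\infty q^k \norm{T\varphi-\varphi}_{\infty,1}$ and summing the geometric series to get $1/p$. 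You instead insert $T\varphi$ once, use $T\bar\varphi = \bar\varphi$ and the contraction bound to get $\norm{\bar\varphi-\varphi}_{\infty,1} \le q\norm{\bar\varphi-\varphi}_{\infty,1} + \norm{T\varphi-\varphi}_{\infty,1}$, and rearrange. Your version is slightly more economical: it avoids the infinite sum and the appeal to convergence of the orbit (needing only finiteness of $\norm{\bar\varphi-\varphi}_{\infty,1}$, which is automatic in finite dimensions), while the paper's telescoping argument has the small pedagogical advantage of exhibiting where the factor $\sum_k q^k = 1/p$ comes from. Both are complete and standard consequences of the contraction property.
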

\begin{proof}
  Existence and uniqueness of the fixed point, as well as the first
  inequality follows trivially from Lemma~\ref{lm:ctT}.  To prove the
  second inequality observe that since $T^k\varphi$ converges to
  $\bar{\varphi}$,
  \begin{equation}
    \norm{\bar{\varphi}-\varphi}_{\infty,1} \le
    \sum_{k=0}^\infty \norm{T^{k+1}\varphi - T^k\varphi}_{\infty,1} \le
    \sum_{k=0}^\infty q^k \norm{T\varphi - \varphi}_{\infty,1}.
  \end{equation}
  Now note that since $p \in (0,1)$ and $p+q=1$,
  \begin{equation}
    \sum_{k=0}^\infty q^k p = 1 \implies
    \sum_{k=0}^\infty q^k = \frac{1}{p}.
  \end{equation}
\end{proof}

The map $T$ and the energy function $F$ are related as follows.

\begin{proposition}
  \label{pr:TF}
  For any $\varphi\in\fs$ and $x\in L^V$
  \begin{align}
  \sum_{i\in V}(T\varphi)_i(x_i)\leq p F(x) + q \sum_{i \in V} \varphi_i(x_i).
  \end{align}
\end{proposition}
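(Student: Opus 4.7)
The plan is to bound each $(T\varphi)_i(x_i)$ from above by replacing the minimum with its value at a convenient choice of $u_j$, sum over $i \in V$, and then use the edge-counting identity together with identity (\ref{eq:fsum}) to rearrange the result into the form $pF(x) + q\sum_i \varphi_i(x_i)$.

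Concretely, I start from the definition (\ref{eq:T}) evaluated at $\tau = x_i$. Since $\min_{u_j \in L}[\tfrac{p}{2}h_{ij}(x_i,u_j) + qw_{ji}\varphi_j(u_j)]$ is at most the value at the particular choice $u_j = x_j$, I get
\[
(T\varphi)_i(x_i) \leq p g_i(x_i) + \sum_{j \in N(i)} \left[ \tfrac{p}{2} h_{ij}(x_i, x_j) + q w_{ji} \varphi_j(x_j) \right].
\]
This is the only nontrivial step and it is immediate.

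Next I sum over $i \in V$ and handle the three resulting terms separately. The $g_i$ term gives $p \sum_{i\in V} g_i(x_i)$ directly. For the pairwise term, I use the fact that each edge $\{i,j\} \in E$ is counted twice in the double sum $\sum_{i\in V}\sum_{j\in N(i)}$, so
\[
\sum_{i\in V}\sum_{j\in N(i)} \tfrac{p}{2} h_{ij}(x_i,x_j) = p \sum_{\{i,j\}\in E} h_{ij}(x_i,x_j),
\]
using the symmetry $h_{ij}(a,b)=h_{ji}(b,a)$. For the $\varphi$ term, I apply identity (\ref{eq:fsum}) with $\alpha_j = \varphi_j(x_j)$ to obtain
\[
\sum_{i\in V}\sum_{j\in N(i)} q w_{ji} \varphi_j(x_j) = q \sum_{j\in V} \varphi_j(x_j).
\]

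Combining the three pieces and recognizing $\sum_i g_i(x_i) + \sum_{\{i,j\}\in E} h_{ij}(x_i,x_j) = F(x)$ from (\ref{eq:F}) yields the claim. There is no real obstacle here: the only subtle points are (i) choosing $u_j = x_j$ to expose the factored structure and (ii) correctly accounting for the edge-double-counting so that the $p/2$ factor matches the energy's single-edge sum.
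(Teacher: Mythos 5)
Your proposal is correct and follows essentially the same route as the paper's proof: bound the minimum by the choice $u_j = x_j$, sum over vertices, and use the symmetry of $h_{ij}$ together with identity (\ref{eq:fsum}) to collapse the double sums. No issues.
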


\begin{proof}
  Direct use of the definition of $T$ yields
  \begin{align*}
    \sum_{i\in V}(T\varphi)_i(x_i)
    & =   \sum_{i\in V} pg_i(x_i)+\sum_{j\in N(i)} \min_{u_j\in L} \frac{p}{2} h_{ij}(x_i,u_j) + q w_{ji} \varphi_j(u_j) \\
    & \le \sum_{i\in V} pg_i(x_i)+\sum_{j\in N(i)} \frac{p}{2} h_{ij}(x_i,x_j) + q w_{ji} \varphi_j(x_j) \\
    & = p \left( \sum_{i\in V}g_i(x_i) + \sum_{i\in V}\sum_{j\in N(i)} \frac{1}{2} h_{ij}(x_i,x_j) \right)
    + q \sum_{i\in V} \sum_{j \in N(i)} w_{ji} \varphi_j(x_j) \\
    & = p F(x) + q \sum_{j\in V}\varphi_j(x_j),
  \end{align*}
  where the last equality follows from the fact that
  $h_{ij}(x_i,x_j)=h_{ji}(x_j,x_i)$ and Equation~(\ref{eq:fsum}).
\end{proof}

Now we show the fixed point of $T$ defines a lower bound
on $F$ in terms of a sum of local terms.

\begin{theorem}
  \label{th:b1}
  Let $\bar{\varphi}$ be the fixed point of $T$ and
  $$H(x) = \sum_{i \in V} \bar{\varphi}_i(x_i).$$
  Then $0 \leq \bar{\varphi}$ and $H(x) \le F(x)$.
\end{theorem}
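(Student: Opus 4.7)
The plan is to split the statement into its two parts: nonnegativity of $\bar{\varphi}$ and the lower bound $H(x) \le F(x)$. Both follow essentially by combining a single observation about $T$ with results already proved in the excerpt (the contraction theorem and Proposition~\ref{pr:TF}); no new machinery is needed.

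First I would establish $\bar{\varphi}\ge 0$. The key observation is that $T$ preserves the cone of nonnegative fields of beliefs: inspecting the definition of $(T\varphi)_i(\tau)$ in \eqref{eq:T}, every summand $pg_i(\tau)$ is nonnegative because $g_i\ge 0$, and for each $j\in N(i)$ the quantity $\tfrac{p}{2}h_{ij}(\tau,u_j)+qw_{ji}\varphi_j(u_j)$ is a sum of nonnegative terms whenever $\varphi\ge 0$ (since $h_{ij}\ge 0$, $p,q\in(0,1)$, $w_{ji}\ge 0$), so its minimum over $u_j$ is also nonnegative. Hence $\varphi\ge 0 \Rightarrow T\varphi\ge 0$. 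Taking the zero field $\varphi^0\equiv 0$, induction gives $T^k\varphi^0\ge 0$ for every $k$. By Theorem~\ref{th:Tconv}, the iterates $T^k\varphi^0$ converge in the $\|\cdot\|_{\infty,1}$ norm to $\bar{\varphi}$, and pointwise limits of nonnegative sequences are nonnegative, so $\bar{\varphi}\ge 0$.

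Second, to show $H(x)\le F(x)$, I would simply instantiate Proposition~\ref{pr:TF} at $\varphi=\bar{\varphi}$. Because $T\bar{\varphi}=\bar{\varphi}$, the left-hand side of the proposition becomes $\sum_{i\in V}\bar{\varphi}_i(x_i)=H(x)$, yielding
\begin{equation*}
H(x)\;\le\;pF(x)+qH(x).
\end{equation*}
Since $\bar{\varphi}$ takes finite real values (it is the limit in a complete normed space of fields, each coordinate of which is a function on the finite set $L$), $H(x)$ is finite and we may subtract $qH(x)$ from both sides, obtaining $pH(x)\le pF(x)$. Dividing by $p>0$ gives $H(x)\le F(x)$ for every $x\in L^V$.

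I don't anticipate any real obstacle: the only slightly delicate point is to justify the subtraction $qH(x)$, which requires finiteness of the fixed point. This is immediate from the contraction property in Lemma~\ref{lm:ctT} applied to any finite initialization (say $\varphi^0\equiv 0$), since the Cauchy sequence of iterates has a finite limit in each coordinate. Everything else is mechanical substitution into results already in the excerpt.
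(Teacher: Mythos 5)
Your proposal is correct and follows essentially the same route as the paper: the bound $H(x)\le F(x)$ is obtained by evaluating Proposition~\ref{pr:TF} at the fixed point and rearranging, and non-negativity is obtained by iterating $T$ from the zero field and passing to the limit (the paper phrases the latter via order preservation of $T$ and monotonicity of the sequence $(0,T0,T^20,\ldots)$, whereas you check directly that $T$ maps non-negative fields to non-negative fields, but these are interchangeable one-line observations). The extra care you take about finiteness of $H(x)$ is harmless and already implicit in the definition of $\fs$.
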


\begin{proof}
  The fact that $H(x) \le F(x)$ follows directly from
  Proposition~\ref{pr:TF}.
  
  To prove $0 \le \bar{\varphi}$ consider the sequence
  $(0,T0,T^20,\ldots)$.  The non-negativity of $g_i$ and $h_{ij}$
  implies $0 \le T0$.  Since $T$ is order preserving (\ref{eq:order}) it follows by
  induction that $T^k0 \le T^{k+1}0$ for all $k\ge0$.  Since the
  sequence is pointwise non-decreasing and converges to
  $\bar{\varphi}$ we have $0 \le \bar{\varphi}$.
\end{proof}

Theorem~\ref{th:b1} allows us to compute both a lower and an upper
bound on the optimal value of $F$, together with a solution where $F$ 
attains the upper bound.

\begin{corollary}
  \label{cr:bracket}
  Let $\bar\varphi$ be the fixed point of $T$ and
  \begin{align*}
    \bar x_i = \argmin_{\tau}\bar\varphi_i(\tau) \;\; \forall i\in V,
  \end{align*}
  then for any $x^*$ minimizing $F$,
  \begin{align*}
    \sum_{i\in V}\bar\varphi_i(\bar x_i)\leq F(x^*)\leq F(\bar x).
  \end{align*}
\end{corollary}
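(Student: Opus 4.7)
The plan is to chain together three ingredients that are already in hand: the fact that $x^*$ is a minimizer of $F$, the pointwise minimality of $\bar x_i$ with respect to $\bar\varphi_i$, and the lower bound $H(x)\le F(x)$ from Theorem~\ref{th:b1}.

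First I would dispatch the right-hand inequality. Since $x^*$ minimizes $F$ by assumption, we automatically have $F(x^*)\le F(x)$ for every $x\in L^V$, and in particular $F(x^*)\le F(\bar x)$. No work to be done here.

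For the left-hand inequality, I would argue in two steps. From the definition $\bar x_i = \argmin_\tau \bar\varphi_i(\tau)$, each $\bar\varphi_i(\bar x_i)$ is the minimum value of $\bar\varphi_i$ over $L$, so in particular $\bar\varphi_i(\bar x_i)\le \bar\varphi_i(x^*_i)$ for every $i\in V$. Summing over $V$ gives
\begin{align*}
\sum_{i\in V}\bar\varphi_i(\bar x_i)\;\le\;\sum_{i\in V}\bar\varphi_i(x^*_i)\;=\;H(x^*).
\end{align*}
Then Theorem~\ref{th:b1} applied at $x=x^*$ gives $H(x^*)\le F(x^*)$, and chaining these yields $\sum_{i\in V}\bar\varphi_i(\bar x_i)\le F(x^*)$.

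There is essentially no obstacle: the corollary is a direct packaging of Theorem~\ref{th:b1} together with the trivial observation that minimizing each coordinate of a separable sum minimizes the sum. The only mild subtlety is remembering that the lower bound $H\le F$ is valid pointwise (so that we may evaluate it at the unknown optimum $x^*$), which is exactly what Theorem~\ref{th:b1} asserts.
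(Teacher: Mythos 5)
Your proposal is correct and follows exactly the paper's own argument: the right-hand inequality is immediate from $x^*$ being a minimizer, and the left-hand inequality chains $\sum_i\bar\varphi_i(\bar x_i)\le\sum_i\bar\varphi_i(x^*_i)\le F(x^*)$ using the coordinatewise minimality of $\bar x$ and Theorem~\ref{th:b1}. Nothing to add.
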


\begin{proof}
  If $x^*$ is a minimizer of $F$, then the inequality $F(x^*)\leq
  F(\bar x)$ holds trivially.  We can use the definition of $\bar x$
  to conclude that
  \begin{align*}
    \sum_{i\in V}\bar\varphi_i(\bar x_i) \leq
    \sum_{i\in V}\bar\varphi_i(x^*_i) \leq F(x^*),    
  \end{align*}
  where the second inequality follows from Theorem~\ref{th:b1} 
\end{proof}

\subsection{Linear Programming Formulation}
\label{sec:lp}

Here we provide a LP characterizing for the fixed point of $T$.  We
note that the LP formulation described here is different from the
standard LP relaxation for minimizing $F(x)$ which involves the local
polytope described in \cite{WJW05}.

Consider
the following LP which depends on a vector of coefficients $a$ in $\fs$,
\begin{align*}
  & \max_\varphi a^T \varphi \\
  & \varphi_i(u_i) \le pg_i(u_i) + \sum_{j \in N(i)} \frac{p}{2} h_{ij}(u_i,u_j)
  + q w_{ji} \varphi_j(u_j) & \qquad \forall i\in V, \forall u\in L^V.  
\end{align*}
Note that the constraints in the LP are equivalent to $\varphi \le
T\varphi$.  Next we show that this LP has a unique solution
which equals the fixed point of $T$ whenever every coefficient is positive,
independent of their specific values.  For example, $\bar \varphi$ is the
optimal solution when $a$ is the vector of ones.

\begin{theorem}
  \label{tr:lp}
  If $a$ is a non-negative vector the fixed point of
  $T$ is an optimal solution for the LP.  If $a$ is a positive vector
  the fixed point of $T$ is the unique optimal solution for the LP.
\end{theorem}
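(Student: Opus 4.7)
The plan is to exploit two facts already established: the constraints of the LP are precisely $\varphi \le T\varphi$, and $T$ is an order-preserving contraction with unique fixed point $\bar\varphi$ (from Theorem~\ref{th:Tconv} and the monotonicity observation~(\ref{eq:order})). So I would first observe that $\bar\varphi$ itself is feasible, since $\bar\varphi = T\bar\varphi$ trivially implies $\bar\varphi \le T\bar\varphi$.

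The main step is to show that every feasible $\varphi$ satisfies $\varphi \le \bar\varphi$ pointwise. Starting from feasibility, $\varphi \le T\varphi$. Applying the order-preservation property of $T$ iteratively gives $T\varphi \le T^2\varphi$, and by induction $T^k\varphi \le T^{k+1}\varphi$ for all $k \ge 0$, so
\begin{equation*}
  \varphi \le T\varphi \le T^2\varphi \le \cdots \le T^k\varphi.
\end{equation*}
Since $T^k\varphi \to \bar\varphi$ in the $\norm{\cdot}_{\infty,1}$ norm by Theorem~\ref{th:Tconv}, convergence is in particular pointwise on the finite index set $V \times L$. Passing to the limit in the chain above yields $\varphi \le \bar\varphi$ pointwise.

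From this domination it follows immediately that when $a \ge 0$ one has $a^T \varphi \le a^T \bar\varphi$ for every feasible $\varphi$, so $\bar\varphi$ is an optimal solution. For uniqueness when $a$ is strictly positive, suppose $\varphi^*$ is any other feasible solution. Then $\varphi^* \le \bar\varphi$ pointwise and, if $\varphi^* \ne \bar\varphi$, there exists some $i \in V$ and $\tau \in L$ with $\varphi^*_i(\tau) < \bar\varphi_i(\tau)$. Because every coefficient of $a$ is strictly positive, this strict inequality propagates to $a^T\varphi^* < a^T\bar\varphi$, contradicting optimality of $\varphi^*$.

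I do not expect any serious obstacle here: the proof is essentially the standard Knaster–Tarski style argument applied to the monotone map $T$, and the convergence needed to pass to the limit is already provided by the contraction result. The only small item worth stating carefully is that norm convergence in $\fs$ implies pointwise convergence (immediate, since the index set is finite), so that the inequality $\varphi \le T^k\varphi$ does pass to the limit coordinatewise.
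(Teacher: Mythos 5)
Your proposal is correct and follows essentially the same route as the paper's own proof: feasibility of $\bar\varphi$, the monotone chain $\varphi \le T\varphi \le T^2\varphi \le \cdots$ converging to $\bar\varphi$ to get $\varphi \le \bar\varphi$ for every feasible point, and then the non-negativity (resp.\ strict positivity) of $a$ for optimality (resp.\ uniqueness). The only difference is that you make explicit the passage from norm convergence to pointwise convergence, which the paper leaves implicit.
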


\begin{proof}
  Let $\bar{\varphi}$ be the fixed point of $T$.  First note that
  $\bar{\varphi}$ is a feasible solution since
  $\bar{\varphi} \le T\bar{\varphi}$.

  Let $\varphi \in \fs$ be any feasible solution for the LP.  The linear
  constraints are equivalent to $\varphi \le T\varphi$.  Since $T$ preserves
  order it follows by induction that $T^k \varphi \le T^{k+1}
  \varphi$ for all $k \ge 0$.  Since the sequence
  $(\varphi,T\varphi,T^2\varphi,\ldots)$ converges to $\bar{\varphi}$
  and it is pointwise non-decreasing we conclude $\varphi \le
  \bar{\varphi}$.
  
  If $a$ is non-negative we have $a^T\varphi \le a^T \bar{\varphi}$ and
  therefore $\bar{\varphi}$ must be an optimal solution for the LP.
  If $a$ is positive and $\varphi \neq \bar{\varphi}$ we have $a^T\varphi <
  a^T\bar{\varphi}$.  This proves the fixed point is the unique
  optimal solution for the LP.
\end{proof}

\section{Algorithm defined by $S$ (Optimal Control)}
\label{sec:S}

In this section we study the algorithm defined by $S$.  We start
by showing that $S$ corresponds to value iteration for an infinite
horizon discounted Markov decision process (MDP) \cite{Bertsekas05}.  

An infinite horizon discounted MDP is defined by a tuple
$(Q,A,c,t,\gamma)$ where $Q$ is a set of states, $A$ is a set of
actions and $\gamma$ is a discount factor in $\R$.  The cost function
$c:Q \times A \to \R$ specifies a cost $c(s,a)$ for taking action $a$
on state $s$.  The transition probabilities $t:Q \times A \times Q \to
\R$ specify the probability $t(s,a,s')$ of moving to state $s'$ if we
take action $a$ in state $s$.

Let $o$ be an infinite sequence of state and action pairs,
$o=((s_1,a_1),(s_2,a_2),\ldots) \in (Q \times A)^\infty$.  The
(discounted) cost of $o$ is
\begin{equation}
  c(o) = \sum_{k=0}^\infty \gamma^k c(s_k,a_k).
\end{equation}

A policy for the MDP is defined by a map $\pi:Q \rightarrow A$,
specifying an action to be taken at each state.  The value of a state
$s$ under the policy $\pi$ is the expected cost of an infinite
sequence of state and action pairs generated using $\pi$ starting at
$s$,
\begin{equation}
  v_\pi(s) = E[c(o) | \pi, s_1=s].
\end{equation}

An optimal policy $\pi^*$ minimizes $v_\pi(s)$ for every
starting state.  Value iteration computes $v_{\pi^*}$ as the fixed
point of ${\cal L}:\mathbb{R}^Q \to \mathbb{R}^Q$,
\begin{equation}
  ({\cal L} v)(s) = \min_{a \in A} c(s,a) +
  \gamma \sum_{s' \in Q} t(s,a,s') v(s').
\end{equation}
The map ${\cal L}$ is known to be a $\gamma$-contraction \cite{Bertsekas05} with respect
to the $\norm{\cdot}_\infty$ norm.

Now we show that $S$ is equivalent to value iteration for an MDP
defined by random walks on $G$.  Intuitively we have states defined by
a vertex $i \in V$ and a label $a \in L$.  An action involves
selecting a different label for each possible next vertex, and the
next vertex is selected according to a random walk defined by the
weights $w_{ij}$.

\begin{lemma}
  \label{lm:MDP}
  Define an MDP $(Q,A,c,t,\gamma)$ as follows.  The states are pairs
  of vertices and labels $Q = V \times L$. The actions specify a
  label for every possible next vertex $A = L^V$.  The discount factor
  is $\gamma = q$.  The transition probabilities and cost
  function are defined by
  \begin{align}
    t((i,\tau),u,(j,\tau')) & =
    \begin{cases}
      w_{ij} & j \in N(i),\;\tau' = u_j\\
      0 & \text{otherwise}
    \end{cases} \\
    c((i,\tau),u) & =
    p g_i(\tau) + \sum_{j \in N(i)} p w_{ij}h_{ij}(\tau,u_j)
  \end{align}
  The map $S$ is equivalent to value iteration for this MDP.
  That is, if $\varphi_i(\tau) = v((i,\tau))$ then
  $$(S\varphi)_i(\tau) = ({\cal L}v)((i,\tau)).$$
\end{lemma}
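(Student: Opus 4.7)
The plan is to simply unfold the definition of the value iteration operator $\mathcal{L}$ with the specified MDP data and verify it coincides term-by-term with the definition of $S$. The argument is essentially bookkeeping; the only non-cosmetic step is a decoupling of the minimization over $L^V$ into independent per-neighbor minimizations over $L$.

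First I would write out $(\mathcal{L}v)((i,\tau))$ in general form as
\[
(\mathcal{L}v)((i,\tau)) = \min_{u \in L^V} \Bigl[\, c((i,\tau),u) + q \sum_{(j,\tau') \in V \times L} t((i,\tau),u,(j,\tau'))\, v((j,\tau')) \,\Bigr].
\]
Then I would plug in the transition probabilities: $t((i,\tau),u,(j,\tau'))$ is zero unless $j \in N(i)$ and $\tau' = u_j$, in which case it equals $w_{ij}$. Hence the double sum over $(j,\tau')$ collapses to a single sum over $j \in N(i)$ with the value at the neighbor evaluated at the chosen label $u_j$, giving $\sum_{j \in N(i)} w_{ij}\, v((j,u_j))$. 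Substituting $\varphi_j(u_j) = v((j,u_j))$ and substituting the definition of $c$ yields
\[
(\mathcal{L}v)((i,\tau)) = p g_i(\tau) + \min_{u \in L^V} \sum_{j \in N(i)} w_{ij} \bigl[\, p h_{ij}(\tau,u_j) + q \varphi_j(u_j) \,\bigr],
\]
where I have pulled the $\tau$-only term $p g_i(\tau)$ outside the minimum.

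The key (and only slightly subtle) step is the decoupling: the expression inside the $\min$ depends on $u$ only through the coordinates $\{u_j : j \in N(i)\}$, and each such coordinate appears in exactly one summand. Therefore the minimum of the sum over $u \in L^V$ equals the sum of the per-coordinate minima, i.e.\ we may interchange $\min_{u \in L^V}$ with $\sum_{j \in N(i)}$ and replace it by $\min_{u_j \in L}$ inside each term. The coordinates $u_j$ for $j \notin N(i)$ are free and irrelevant, which is why the decoupling is lossless. After this interchange we obtain precisely the definition of $(S\varphi)_i(\tau)$ from Definition~\ref{df:maps}, completing the proof.

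The main obstacle is conceptual rather than technical, namely recognizing that the MDP is set up so that actions independently specify the label to commit to at each possible next vertex, which is exactly what makes the $\min$ over $L^V$ factor across neighbors; once this is stated clearly the remainder is a direct calculation.
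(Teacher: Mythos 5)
Your proposal is correct and follows essentially the same route as the paper's proof: substitute the MDP data into $\mathcal{L}$, collapse the sum over states using the sparsity of $t$, and decouple the minimization over $u \in L^V$ into independent per-neighbor minimizations. You actually justify the decoupling step more explicitly than the paper, which simply performs the interchange without comment.
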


\begin{proof}
  The result follows directly from the definition of the MDP, ${\cal
    L}$ and $S$.
  \begin{align}
    ({\cal L} v)((i,\tau))
    & = \min_{u \in L^V} c((i,\tau),u) + \gamma \sum_{(j,\tau') \in Q} t((i,\tau),u,(j,\tau'))v(j,\tau') \\
    & = \min_{u \in L^V} pg_i(\tau) + \sum_{j \in N(i)} pw_{ij}h_{ij}(\tau,u_j) + q \sum_{j \in N(i)} w_{ij} v(j,u_j) \\
    & = pg_i(\tau) + \sum_{j \in N(i)} w_{ij} \min_{u_j \in L} ph_{ij}(\tau,u_j) + q v(j,u_j) \\
    & = (S\varphi)_i(\tau)
\end{align}
\end{proof}

The relationship to value iteration shows $S$ is a contraction and we
have the following results regarding fixed point iterations with $S$.

\begin{theorem}
  The map $S$ has a unique fixed point $\hat \varphi$ and
  for any $\varphi \in \fs$ and integer $k \ge 0$,
  \begin{align}
    \norm{\hat{\varphi}-S^k\varphi}_\infty
    & \leq q^k \norm{\hat{\varphi}-\varphi}_\infty, \\
    \norm{\hat{\varphi}-\varphi}_\infty &
    \leq \frac{1}{p} \norm{S\varphi-\varphi}_\infty.
  \end{align}
\end{theorem}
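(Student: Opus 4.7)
The plan is to reduce the theorem to the Banach fixed point theorem by observing, via Lemma~\ref{lm:MDP}, that $S$ is really value iteration in disguise. Under the bijection $\varphi \leftrightarrow v$ given by $\varphi_i(\tau) = v((i,\tau))$, the field space $\fs$ is isometric to $\R^{V\times L}$ equipped with the supremum norm, since $\norm{\varphi}_\infty = \max_{i\in V,\tau\in L}\abs{\varphi_i(\tau)} = \norm{v}_\infty$. Lemma~\ref{lm:MDP} gives $(S\varphi)_i(\tau) = (\mathcal{L}v)((i,\tau))$, and the paper has already cited that $\mathcal{L}$ is a $\gamma$-contraction with $\gamma = q$ in $\norm{\cdot}_\infty$. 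Transporting this through the isometry immediately yields
\[
\norm{S\varphi - S\psi}_\infty \le q\norm{\varphi-\psi}_\infty \qquad \forall \varphi,\psi \in \fs.
\]

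With the contraction in hand, existence and uniqueness of $\hat{\varphi}$ follow directly from the Banach fixed point theorem applied to the complete metric space $\fs$ under $\norm{\cdot}_\infty$. The first displayed inequality is then an immediate induction: since $\hat{\varphi}$ is fixed, $S^k\hat{\varphi} = \hat{\varphi}$, so applying the contraction $k$ times gives $\norm{\hat{\varphi}-S^k\varphi}_\infty = \norm{S^k\hat{\varphi}-S^k\varphi}_\infty \le q^k\norm{\hat{\varphi}-\varphi}_\infty$.

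For the second (a posteriori) bound I would mimic exactly the telescoping argument used in the proof of Theorem~\ref{th:Tconv}. Since $S^k\varphi \to \hat{\varphi}$ in $\norm{\cdot}_\infty$, the triangle inequality gives
\[
\norm{\hat{\varphi}-\varphi}_\infty \le \sum_{k=0}^\infty \norm{S^{k+1}\varphi - S^k\varphi}_\infty \le \sum_{k=0}^\infty q^k \norm{S\varphi-\varphi}_\infty = \frac{1}{p}\norm{S\varphi-\varphi}_\infty,
\]
where the middle step uses the contraction property iterated $k$ times, and the final equality uses $p+q=1$ with $p\in(0,1)$ to sum the geometric series.

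There is essentially no obstacle: the theorem is a direct transfer of the standard Banach-type estimates through the MDP equivalence established in Lemma~\ref{lm:MDP}. The only point that deserves a sentence of justification is matching the norms, namely that the supremum norm on $\fs$ coincides with the supremum norm on $\R^{V\times L}$ under the natural identification, after which everything reduces to the cited contraction property of $\mathcal{L}$ and a geometric series.
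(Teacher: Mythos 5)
Your proposal is correct and follows essentially the same route as the paper: the paper likewise obtains the contraction property of $S$ (and hence existence, uniqueness, and the first inequality) directly from Lemma~\ref{lm:MDP} together with the cited fact that $\mathcal{L}$ is a $q$-contraction in $\norm{\cdot}_\infty$, and proves the second inequality by the same telescoping geometric-series argument used for $T$ in Theorem~\ref{th:Tconv}. Your extra sentence checking that the sup norm on $\fs$ matches the sup norm on $\R^{V\times L}$ under the identification is a reasonable detail the paper leaves implicit.
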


\begin{proof}
  The first inequality follows directly from Lemma~\ref{lm:MDP} and the
  fact that ${\cal L}$ is a $\gamma$-contraction with $\gamma=q$. The
  proof of the second inequality is similar to the proof of the
  analogous result for the map $T$ in Theorem~\ref{th:Tconv}.
\end{proof}

\subsection{Random Walks}

The formalism of MDPs is quite general, and encompasses the fixed point
algorithm defined by $S$.  In this section we further analyze this fixed
point algorithm and provide an interpretation using one-dimensional problems
defined by random walks on $G$.

The weights $w_{ij}$ define a random process that generates infinite
walks on $G$.  Starting from some vertex in $V$ we repeatedly move to a
neighboring vertex, and the probability of moving from $i \in V$ to $j
\in N(i)$ in one step is given by $w_{ij}$.

An infinite walk $\omega=(\omega_1,\omega_2,\ldots) \in V^\infty$ can
be used to define an energy on an infinite sequence of labels
$z=(z_1,z_2,\ldots) \in L^\infty$,
\begin{equation}
  F_\omega(z) = \sum_{t = 0}^\infty pq^t g_{\omega_t}(z_t) + pq^t h_{\omega_t \omega_{t+1}}(z_t,z_{t+1}).
\end{equation}
The energy $F_\omega(z)$ can be seen as the energy of a pairwise classification
problem on a graph $G'=(V',E')$ that is an infinite path,
\begin{align}
  V'&=\{1,2,\ldots\}, \\
  E'&=\{\{1,2\},\{2,3\},\ldots\}.
\end{align}
The graph $G'$ can be interpreted as a one-dimensional ``unwrapping''
of $G$ along the walk $\omega$.  This unwrapping defines a map from
vertices in the path $G'$ to vertices in $G$.

Consider a policy $\pi : V \times L \times V \to L$ that specifies 
$z_{k+1}$ in terms of $\omega_k$, $z_k$ and $\omega_{k+1}$,
\begin{equation}
  z_{k+1} = \pi(\omega_k,z_k,\omega_{k+1}).
\end{equation}
Now consider the expected value of $F_\omega(z)$ when $\omega$ is a random walk
starting at $i \in V$ and $z$ is a sequence of labels
defined by the policy $\pi$ starting with $z_1=\tau$,
\begin{equation}
  v_\pi(i,\tau) = E[F_\omega(z)|\omega_1=i,z_1=\tau,z_{k+1} = \pi(\omega_k,z_k,\omega_{k+1})].
\end{equation}
There is an optimal policy $\pi^*$ that minimizes $v_\pi(i,\tau)$ for
every $i \in V$ and $\tau \in L$.  Let $\hat{\varphi}$ be the fixed
point of $S$.  Then $\hat{\varphi}_i(\tau) = v_{\pi^*}(i,\tau)$.  This
follows directly from the connection between $S$ and the MDP described
in the last section.

\subsection{Bounding the Value Functions of $F$}

Now we show that $\hat \varphi$ defines lower bounds on the
value functions (max-marginals) $f_i$ defined in (\ref{eq:f}).  We start by showing that $f_i$
can be lower bounded by $f_j$ for $j \in N(i)$.

\begin{proposition}
  \label{pr:flowerbound}
  Let $i \in V$ and $j \in N(i)$.
  \begin{align}
    f_i(u_i) & \ge pg_i(u_i) + \min_{u_j} ph_{ij}(u_i,u_j) + qf_j(u_j), \\
    f_i(u_i) & \ge pg_i(u_i) + \sum_{j \in N(i)} w_{ij} \min_{u_j} ph_{ij}(u_i,u_j) + qf_j(u_j).
  \end{align}
\end{proposition}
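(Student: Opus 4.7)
The plan is to prove the first inequality directly from the definition of $f_i$ by establishing two pointwise lower bounds on $F(x)$ for any labeling $x$ with $x_i = u_i$, and then combining them by convex combination. The second inequality then follows easily by averaging the first one with weights $w_{ij}$.

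First I would fix $i \in V$, $j \in N(i)$, and $u_i \in L$, and take any $x \in L^V$ with $x_i = u_i$. Because the costs $g_k$ and $h_{kl}$ are non-negative, I can discard all but the $g_i$ term and the $h_{ij}$ term in the definition of $F$ to get
\begin{equation*}
  F(x) \ge g_i(u_i) + h_{ij}(u_i, x_j).
\end{equation*}
On the other hand, since $x$ itself is a feasible labeling in the minimization defining $f_j(x_j)$, I get $F(x) \ge f_j(x_j)$. I would then form the convex combination with weights $p, q$ (using $p + q = 1$) to obtain
\begin{equation*}
  F(x) \ge p g_i(u_i) + p h_{ij}(u_i, x_j) + q f_j(x_j).
\end{equation*}

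Next, I would minimize both sides over $x \in L^V$ with $x_i = u_i$. The left side becomes $f_i(u_i)$ by definition, while on the right side only the dependence on $x_j$ survives, so the minimization reduces to a minimization over $u_j \in L$, yielding
\begin{equation*}
  f_i(u_i) \ge p g_i(u_i) + \min_{u_j \in L} \bigl( p h_{ij}(u_i, u_j) + q f_j(u_j) \bigr),
\end{equation*}
which is the first inequality.

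For the second inequality I would use that the first holds for every $j \in N(i)$. Since $\sum_{j \in N(i)} w_{ij} = 1$ and the weights are non-negative, taking the corresponding convex combination of the first inequalities preserves the bound; the constant term $p g_i(u_i)$ factors out, giving exactly the claimed bound. There is no real obstacle here; the only point to highlight is the slightly unusual step of using non-negativity to drop terms in $F(x)$, which is what makes the $p$ and $q$ weights appear naturally.
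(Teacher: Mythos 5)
Your proposal is correct and follows essentially the same route as the paper: both arguments split $F(x)$ as $pF(x)+qF(x)$, bound the $p$-part below by $g_i(u_i)+h_{ij}(u_i,x_j)$ using non-negativity of the remaining terms, bound the $q$-part below by $f_j(x_j)$ via feasibility, and then minimize over $x_j$; the second inequality is obtained in both cases by the convex combination over $j\in N(i)$. Your reorganization (pointwise bound first, then a single minimization) is just a cleaner presentation of the paper's nested-minimum computation.
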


\begin{proof}
The second inequality follows from the first one by taking a convex combination over $j \in N(i)$.  To prove the first inequality note that,
  \begin{align}
    f_i(u_i)
    & = \min_{\substack{x \in L^V\\ x_i=u_i}} F(x) \\
    & = \min_{u_j \in L} \min_{\substack{x \in L^V\\ x_i=u_i, x_j=u_j}} F(x) \\
    & = \min_{u_j \in L} \min_{\substack{x \in L^V\\ x_i=u_i, x_j=u_j}} pF(x)+qF(x) \\
    & \ge pg_i(u_i) + \min_{u_j \in L} ph_{ij}(u_i,u_j) + \min_{\substack{x \in L^V\\ x_i=u_i, x_j=u_j}} qF(x) \\
    & \ge pg_i(x_i) + \min_{u_j \in L} ph_{ij}(u_i,u_j) + \min_{\substack{x \in L^V\\ x_j=u_j}} qF(x) \\
    & = pg_i(x_i) + \min_{u_j \in L} ph_{ij}(u_i,u_j) + qf_j(u_j).
  \end{align}
The first inequality above follows from $F(x) \ge g_i(x_i) +
h_{ij}(x_i,x_j)$ since all the terms in $F(x)$ are non-negative.  The
second inequality follows from the fact that we are minimizing $F(x)$
over $x$ with fewer restrictions.
\end{proof}

The map $S$ and the value functions are related as follows.

\begin{proposition}
  \label{pr:Sf}
  Let $f = (f_1,\ldots,f_N) \in \fs$ be a field of beliefs defined by the value functions.
  \begin{align}
    Sf \le f.
  \end{align}
\end{proposition}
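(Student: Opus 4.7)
The plan is to observe that this proposition is essentially an immediate corollary of Proposition~\ref{pr:flowerbound}, so the main task is just to line up the two expressions correctly.

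Concretely, I would fix $i \in V$ and $\tau \in L$ and write out the definition
\[
(Sf)_i(\tau) = p g_i(\tau) + \sum_{j \in N(i)} w_{ij} \min_{u_j \in L}\, p h_{ij}(\tau,u_j) + q f_j(u_j).
\]
The right-hand side here is literally the right-hand side of the second inequality of Proposition~\ref{pr:flowerbound} (with $u_i$ renamed to $\tau$). That proposition therefore yields $(Sf)_i(\tau) \le f_i(\tau)$. Since $i$ and $\tau$ were arbitrary, this is the pointwise inequality $Sf \le f$ in the order on $\fs$.

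Thus there is essentially no obstacle once Proposition~\ref{pr:flowerbound} is in hand: the content of the present proposition is purely that the lower bound on $f_i$ derived there happens to coincide with the formula defining $S$. I would therefore keep the proof to one or two sentences, pointing the reader to Proposition~\ref{pr:flowerbound} and emphasizing that the coincidence with the definition of $S$ is by inspection of equation~(\ref{eq:S}). If a slightly self-contained version is preferred, I could instead re-derive the inequality by reusing the argument that underlies Proposition~\ref{pr:flowerbound}: split the minimization defining $f_i(\tau)$ over the value $u_j$ at each neighbor $j$, use non-negativity to drop the remaining edge and vertex terms from $F(x)$ other than $g_i(x_i)$ and $h_{ij}(x_i,x_j)$, recognize the tail as a minimum bounded below by $f_j(u_j)$, and finally take a convex combination over $j \in N(i)$ with weights $w_{ij}$ (using that $\sum_{j \in N(i)} w_{ij} = 1$ from~(\ref{eq:wsum})) to bring in the outer weighted sum that appears in $S$.
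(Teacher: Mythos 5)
Your proposal is correct and matches the paper exactly: the paper's proof is the single line that the result ``follows directly from Proposition~\ref{pr:flowerbound},'' and your identification of the right-hand side of that proposition's second inequality with the defining formula \eqref{eq:S} for $(Sf)_i(\tau)$ is precisely the intended reasoning. The optional self-contained re-derivation you sketch is just the proof of Proposition~\ref{pr:flowerbound} inlined, so it adds nothing beyond the paper's route.
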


\begin{proof}
The result follows directly from Proposition~\ref{pr:flowerbound}.
\end{proof}

Now we show that the fixed point of $S$ defines lower bounds on the value functions.

\begin{theorem}
  Let $\hat{\varphi}$ be the fixed point of $S$.  Then
  $$0 \le \hat{\varphi}_i(\tau) \le f_i(\tau).$$
\end{theorem}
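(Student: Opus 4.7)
The statement splits into two inequalities, each handled by the standard trick of starting fixed point iteration at a strategically chosen field of beliefs and using the order-preservation property (\ref{eq:order}) together with the contraction of $S$.

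For the lower bound $0 \le \hat\varphi$, I would imitate the argument used for $T$ in Theorem~\ref{th:b1}. Consider the sequence $(0, S 0, S^2 0, \ldots)$. Since $g_i \ge 0$, $h_{ij} \ge 0$ and $w_{ij} \ge 0$, direct inspection of the definition (\ref{eq:S}) gives $(S 0)_i(\tau) \ge 0$, hence $0 \le S 0$. Since $S$ preserves order, induction yields $S^k 0 \le S^{k+1} 0$ for all $k \ge 0$, so the sequence is pointwise non-decreasing. The contraction property guarantees $S^k 0 \to \hat\varphi$, and taking the limit of the pointwise inequality $0 \le S^k 0$ gives $0 \le \hat\varphi$.

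For the upper bound $\hat\varphi \le f$, I would apply the same idea but starting from $f$ instead of $0$, using Proposition~\ref{pr:Sf} as the seed inequality. From $S f \le f$ and the order-preservation of $S$, induction produces $S^{k+1} f \le S^k f$ for all $k \ge 0$, so the sequence $(f, S f, S^2 f, \ldots)$ is pointwise non-increasing. By the contraction property $S^k f \to \hat\varphi$, and passing to the limit in the pointwise inequality $S^k f \le f$ gives $\hat\varphi \le f$, i.e. $\hat\varphi_i(\tau) \le f_i(\tau)$ for all $i$ and $\tau$.

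There is no real obstacle: once Proposition~\ref{pr:Sf} has done the hard work of producing a field of beliefs that is mapped downward by $S$, and the non-negativity of the cost data has done the symmetric job for $0$, the result is a pure monotone-iteration argument analogous to Theorem~\ref{th:b1}. The only care needed is to invoke order-preservation of $S$, which, although not explicitly stated for $S$, is trivial from (\ref{eq:S}) and was already recorded in (\ref{eq:order}).
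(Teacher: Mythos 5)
Your proposal is correct and follows essentially the same route as the paper: non-negativity of the costs gives $0 \le S0$, Proposition~\ref{pr:Sf} gives $Sf \le f$, and in each case order-preservation plus convergence of the iterates to $\hat{\varphi}$ finishes the argument. The only difference is that you spell out the monotone-iteration details slightly more explicitly than the paper does.
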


\begin{proof}
  Since the costs $g_i$ and $h_{ij}$ are non-negative we have $0 \le
  S0$.  Using the fact that $S$ preserves order we can conclude $0 \le
  \hat{\varphi}$.

  Since $Sf\leq f$ and $S$ preserves order, $S^kf\leq f$ for all $k$. To end
  the proof, take the limit $k\to \infty$ at the left hand-side of this
  inequality.
\end{proof}

\section{Numerical Experiments}

In this section we illustrate the practical feasibility of the
proposed algorithms with preliminary experiments in computer vision
problems.  

\subsection{Image Restoration}

The goal of image restoration is to estimate a clean image $z$ from a
noisy, or corrupted, version $y$.  A classical approach to solve this
problem involves looking for a piecewise smooth image $x$ that is
similar to $y$ \cite{GG84,BZ87}.  In the weak membrane model
\cite{BZ87} the local costs $g_i(a)$ penalize differences between $x$
and $y$ while the pairwise costs $h_{ij}(a,b)$ penalize differences
between neighboring pixels in $x$.  In this setting, the graph
$G=(V,E)$ is a grid in which the vertices $V$ correspond to pixels and
the edges $E$ connect neighboring pixels.  The labels $L$ are possible
pixel values and a labeling $x$ defines an image.  For our experiments
we use $L=\{0,\ldots,255\}$ corresponding to the possible values in an
8-bit image.

To restore $y$ we define the energy $F(x)$ using
\begin{align}
  g_i(x_i) &= (y_i-x_i)^2; \\
  h_{ij}(x_i,x_j) &= \lambda \min((x_i-x_j)^2,\tau).
\end{align}
The local cost $g_i(x_i)$ encourages $x_i$ to be similar to $y_i$.
The pairwise costs depend on two parameters
$\lambda,\tau \in \R$.  The cost $h_{ij}(x_i,x_j)$ encourages $x_i$ to
be similar to $x_j$ but also allows for large differences since the
cost is bounded by $\tau$.  The value of $\lambda$ controls the
relative weight of the local and pairwise costs.  Small values of
$\lambda$ lead to images $x$ that are very similar to the noisy image
$y$, while large values of $\lambda$ lead to images $x$ that are
smoother.

Figure~\ref{fig:restore} shows an example result of image restoration
using the algorithm defined by $T$.  The example illustrates the
algorithm is able to recover a clean image that is smooth almost
everywhere while at the same time preserving sharp discontinuities at
the boundaries of objects.  For comparison we also show the results of
belief propagation.  In this example the noisy image $y$ was obtained
from a clean image $z$ by adding independent noise to each pixel using
a Gaussian distribution with standard deviation $\sigma=20$.  The
input image has 122 by 179 pixels.  We used $\lambda = 0.05$ and $\tau
= 100$ to define the pairwise costs.  For the algorithm defined by $T$
we used uniform weights, $w_{ij} = 1/\dgr(i)$ and $p = 0.001$.  Both
the algorithms defined by $T$ and belief propagation were run for 100
iterations.  We based our implementations on the belief propagation
code from \cite{FH06}, which provides efficient methods for handling
truncated quadratic discontinuity costs.  The algorithm defined by $T$
took 16 seconds on a 1.6Ghz Intel Core i5 laptop computer while
belief propagation took 18 seconds.

\begin{figure}
  \centering
  \begin{tabular}{cccc}
  \includegraphics[width=1.5in]{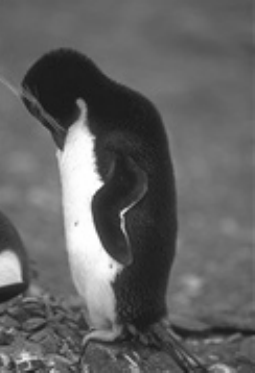} &
  \includegraphics[width=1.5in]{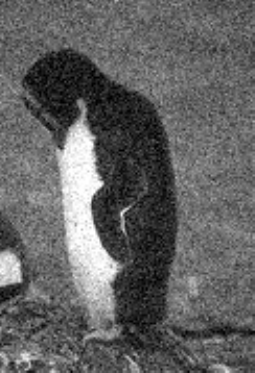} &
  \includegraphics[width=1.5in]{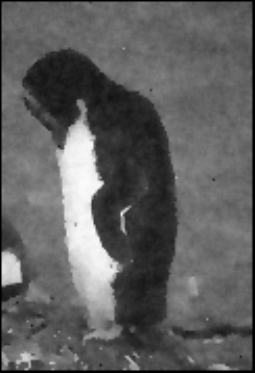} &
  \includegraphics[width=1.5in]{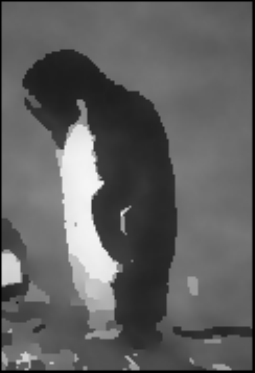} \\
  Original Image & Noisy Image & Output of $T$ & Output of BP \\
  & & RMS error = 8.9 & RMS error = 10.7 \\
  & & Energy = 1519837 & Energy = 650296
  \end{tabular}
  \caption{Image restoration using the fixed point algorithm defined
    by $T$ and BP.  The algorithms were run for 100 iterations.}
  \label{fig:restore}
\end{figure}

The goal of restoration is to recover a clean image $z$.  We evaluate
the restored image $x$ by computing the root mean squared error (RMSE)
between $x$ and $z$.  We see in Figure~\ref{fig:restore} that when
$\lambda=0.05$ and $\tau = 100$ the result of $T$ has lower RMSE value
compared to the result of BP, even though the result of $T$ has
significantly higher energy.  We also evaluate the results of $T$, $S$
and BP using different values of $\lambda$ in
Table~\ref{tb:restoration}.  For all of these experiments we used
$\tau = 100$ and ran each algorithm for 100 iterations.  The minimum
RMSE obtained by $T$ and $S$ is lower than the minimum RMSE obtained
by BP considering different values for $\lambda$, even though $T$ and
$S$ alwayd find solutions that have higher energy compared to BP.  This
suggests the algorithms we propose do a good job aggregating local information
using pairwise constraints, but the energy minimization problem defined
by $F(x)$ may not be the ideal formulation of the restoration problem.

\begin{table}
\centering
  \begin{tabular}{|l|r|r|r|r|r|r|}
  \hline
  & \multicolumn{2}{|c|}{T}
  & \multicolumn{2}{|c|}{S}
  & \multicolumn{2}{|c|}{BP} \\
  \hline
  $\lambda$ & Energy & RMSE & Energy & RMSE & Energy & RMSE \\
  \hline
  0.01 & 659210  & 10.1 & 842459  & 9.1  & 211646 & 15.8 \\
  \hline
0.02 & 943508  & 9.0  & 1220572 & 8.7  & 337785 & 13.4 \\
  \hline
0.05 & 1519837 & 8.9  & 1873560 & 10.9 & 650296 & 10.7 \\
  \hline
0.10 & 2089415 & 11.0 & 2506230 & 14.8 & 1080976 & 10.1 \\
  \hline
0.20 & 2700193 & 14.8 & 2942392 & 17.7 & 1730132 & 12.9 \\
  \hline
\end{tabular}
\caption{Results of restoration using $T$, $S$ and belief propagation
  (BP).  The goal of restoration is to recover the original image $z$.
  We show the energy of the restored image $x$ and the root mean
  squared error (RMSE) between $x$ and $z$.  We show the results of
  the different algorithms for different values of the parameter
  $\lambda$.  Both $T$ and $S$ obtain lower RMSE compared to BP even
  though BP generally obtains results with significantly lower
  energy.}
\label{tb:restoration}
\end{table}

\subsection{Stereo Depth Estimation}
\label{sec:stereo}

In stereo matching we have two images $I_l$ and $I_r$ taken at the
same time from different viewpoints.  Most pixels in one image have a
corresponding pixel in the other image, being the
projection of the same three-dimensional point.  The difference in the
coordinates of corresponding pixels is called the disparity.
We assume the images are rectified such that a pixel $(x,y)$ in $I_l$
matches a pixel $(x-d,y)$ in $I_r$ with $d \ge 0$.  For rectified
images the distance of a three-dimensional point to the 
image plane is inversely proportional to the disparity.

In practice we consider the problem of labeling every pixel in $I_l$
with an integer disparity in $L=\{0,\ldots,D\}$.  In this case a
labeling $x$ is a disparity map for $I_l$.  The
local costs $g_i(a)$ encourage pixels in $I_l$ to be matched to pixels
of similar color in $I_r$.  The pairwise costs $h_{ij}(a,b)$ encourage
piecewise smooth disparity maps.

The model we used in our stereo experiment is defined by,
\begin{align}
  g_{i}(a) &= \min(\gamma, ||I_l(i)-I_r(i-(a,0))||_1); \\
  h_{ij}(a,b) &=
  \begin{cases}
    0 \qquad a = b, \\
    \alpha \qquad |a-b| = 1, \\
    \beta \qquad |a-b| > 1.
  \end{cases}
\end{align}
Here $I_l(i)$ is the value of pixel $i$ in $I_l$ while $I_r(i-(a,0))$
is the value of the corresponding pixel in $I_r$ assuming a disparity
$a$ for $i$.  The $\ell_1$ norm $||I_l(i)-I_r(i-(a,0))||_1$ defines a
distance between RGB values (matching pixels should have similar
color).  The color distance is truncated by $\gamma$ to allow for some
large color differences which occur due to specular reflections and
occlusions.  The pairwise costs depend on two parameter $\alpha, \beta
\in \mathbb{R}$ with $\alpha < \beta$.  The pairwise costs encourage
the disparity neighboring pixels to be similar or differ by 1 (to
allow for slanted surfaces), but also allows for large discontinuities
which occur at object boundaries.

Figure~\ref{fig:stereo} shows an example result of disparity
estimation using the fixed point algorithm defined by $S$.  In this
example we used non-uniform weights $w_{ij}$ to emphasize the
relationships between neighboring pixels of similar color, since those
pixels are most likely to belong to the same object/surface.
The parameters we used for the results in
Figure~\ref{fig:stereo} were defined by,
\begin{align}
  w_{ij} \propto 0.01 + e^{-0.2 ||I_l(i)-I_l(j)||_1},
\end{align}
$p = 0.0001$, $\alpha = 500$, $\beta = 1000$ and $\gamma = 20$.  The
input image has 384 by 288 pixels and the maximum disparity is $D=15$.
The fixed point algorithm was run for 1,000 iterations which took 13
seconds on a laptop computer.

\begin{figure}
  \centering
  \begin{tabular}{cc}
  \includegraphics[width=2.5in]{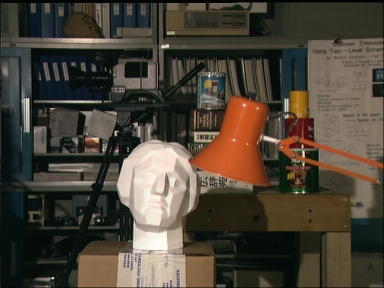} &
  \includegraphics[width=2.5in]{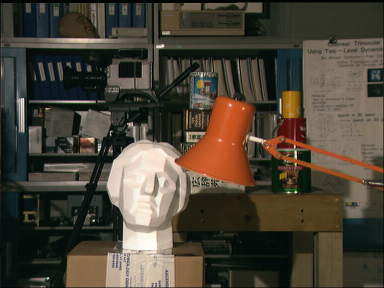} \\
  $I_l$ & $I_r$ \\ \\
  \includegraphics[width=2.5in]{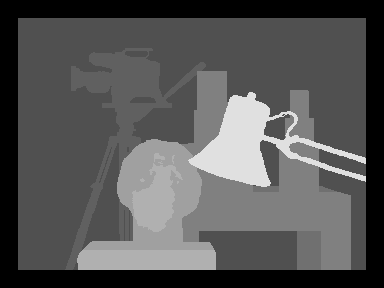} &
  \includegraphics[width=2.5in]{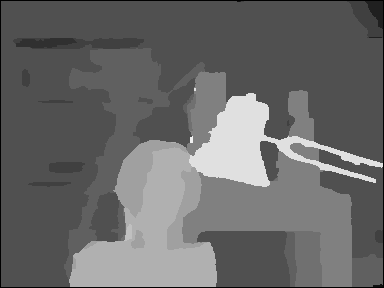} \\
  Ground truth & Result of $S$ \\
  \end{tabular}
  \caption{Stereo disparity estimation using the fixed point algorithm
    defined by $S$ on the Tsukuba image pair.  The algorithm was run
    for 1,000 iterations.}
  \label{fig:stereo}
\end{figure}

We note that the results in
Figure~\ref{fig:stereo} are similar to results obtained min-sum
belief propagation shown in \cite{FH06}.

\section{Conclusion and Future Work}

The experimental results in the last section illustrate the practical
feasibility of the algorithms under study.  Our theoretical results prove these
algorithms are guaranteed to converge to unique fixed points on graphs
with arbitrary topology and with arbitrary pairwise relationships.  This
includes the case of repulsive interactions which often leads to convergence
problems for message passing methods.

Our results can be extended to other contraction maps
similar to $T$ and $S$ and alternative methods for
computing the fixed points of these maps.  Some specific directions
for future work are as follows.

\begin{enumerate}

\item \emph{Asynchronous updates}.  It is possible to define
  algorithms that update the beliefs of a single vertex at a time
  in any order.  As long as all vertices are updated infinitely many
  times, the resulting algorithms converge to the same fixed point as
  the parallel update methods examined in this work.  We conjecture
  that in a \emph{sequential} computation, the sequential update of
  vertices in a ``sweep'' would converge faster than a
  ``parallel'' update.  Moreover, after a sequential update of all vertices,
  the neighbors of those vertices with greater change
  should be the first ones to be updated in the next ``sweep''.

\item \emph{Non-backtracking random walks}.  The algorithms defined by
  $S$ and $T$ can be understood in terms of random walks on $G$.  It
  is possible to define alternative algorithms based on
  non-backtracking random walks.  In particular, starting with the MDP
  in Section~\ref{sec:S} we can increase the state-space $Q$ to keep track
  of the last vertex visited in the walk and define transition
  probabilities that avoid the previous vertex when selecting the next
  one.  The resulting value iteration algorithm becomes very similar
  to belief propagation and other message passing methods that involve
  messages defined on the edges of $G$.
\end{enumerate}


\bibliography{prop}
\bibliographystyle{plain}

\end{document}